\tikzset{>=stealth}
\DeclareMathAlphabet{\mathcal}{OMS}{cmsy}{m}{n}
\setlist[itemize]{leftmargin=2.0cm}
\theoremstyle{plain}
\newtheorem{theorem}{Theorem}[section]
\newtheorem{Lemma}[theorem]{Lemma}
\newtheorem{corollary}[theorem]{Corollary}
\theoremstyle{definition}
\newtheorem{definition}[theorem]{Definition}
\newcommand{\lrstep}[2]{\mathrel{\xrightarrow{#1}_{#2}}}
\newcommand{\rlstep}[2]{\mathrel{\xleftarrow{#1}_{#2}}}
\newcommand{\ignore}[1]{}
\newcommand{\flr}{\rightarrow}
\title{Lynch-Morawska Systems on Strings}
\author{\small Daniel S. Hono II$^1$
and \small Paliath  Narendran$^1$ 
and \small Rafael Veras$^1$\\
\small
$^1$ University at Albany--SUNY (USA),  
            e-mail: {\tt \{dhono,pnarendran,rveras\}@albany.edu} \\
}
\title{Lynch-Morawska Systems on Strings}
\titlerunning{Notes on Lynch-Morawska Systems}
\author{
    Daniel S. Hono II\inst{1}
    \and
   	Paliath  Narendran\inst{1}
    \and
   	Rafael Veras\inst{1}    
}
\authorrunning{D. S. Hono II, P. Narendran, and R. Veras}
\institute{
    University at Albany---SUNY (USA)  \\
    \email{\{dhono,pnarendran,rveras\}@albany.edu}\\
}
\date{}
\begin{document}

\begin{titlepage}

{\vspace*{-1in}\hspace*{-.5in}
\parbox{7.25in}{
\setlength{\baselineskip}{13pt}
\makebox{\ }\hfill {\footnotesize College of Engineering and Applied Sciences} \\
\makebox{\ }\hfill {\footnotesize Computer Science Department} \\
\makebox{\ }\\
}

\vspace{-.775in}}

\epsfxsize=3.15in
\epsfclipon
\hspace{-0.5in}{\raggedright{
\epsffile{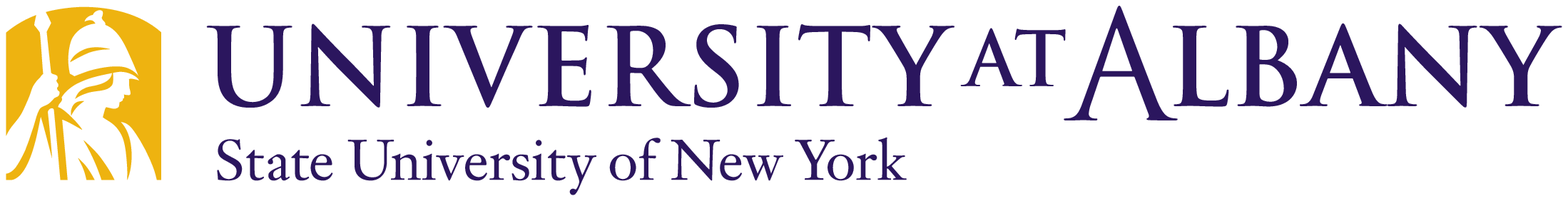}
}}

\vspace{2in}

\begin{center}
{\huge\bf Lynch-Morawska Systems on Strings}\\[+25pt]
\end{center}

\vspace{1.5in}

\begin{center}
{\large\bf 
Daniel S. Hono II\\[+3pt]
Paliath Narendran\\[+3pt]
Rafael Veras}\\[+3pt]

\end{center}
\end{titlepage}

\maketitle

\begin{abstract}
We investigate properties of convergent and forward-closed string
rewriting systems in the context of the syntactic criteria introduced
in~\cite{LynchMorawska} by Christopher Lynch and Barbara Morawska (we
call these $LM$-Systems). Since a string rewriting system can be
viewed as a term-rewriting system over a signature of purely monadic
function symbols, we adapt their definition to the string rewriting
case. We prove that the subterm-collapse problem for convergent and
forward-closed string rewriting systems is effectively
solvable. Therefore, there exists a decision procedure that verifies
if such a system is an $LM$-System. We use the same construction to
prove that the \emph{cap problem} from the field of cryptographic protocol
analysis, which is undecidable for general $LM$-systems, is decidable
when restricted to the string rewriting case.
\end{abstract}

\section{Introduction}
In this paper we investigate the properties of convergent and
forward-closed string rewriting systems. Our motivation comes from the
syntactic criteria defined by Christopher Lynch and Barbara Morawska
in~\cite{LynchMorawska}. They showed that for any term-rewriting
system $R$ that satisfies their criteria (which we call $LM$-Systems),
the unification problem modulo $R$ is solvable in polynomial
time. In~\cite{NotesOnBSM} it was shown that these conditions are
tight, i.e., relaxing any of them leads to NP-hard unification
problems.  It was also shown in~\cite{NotesOnBSM} that the
subterm-collapse problem for term-rewriting systems that satisfy all
of the other conditions of $LM$-Systems is undecidable.

In this current work, we show that the subterm-collapse problem is decidable
when restricted to convergent and forward-closed string rewriting systems. These
string rewriting systems can be viewed as term rewriting systems over a signature
of purely monadic function symbols. We give an analogous definition of $LM$-Systems
for string rewriting systems. Thus, given a forward-closed and convergent string rewriting system $T$ there is an algorithm that decides if $T$ is an $LM$-System. 

The construction used to show the decidability of the
subterm-collapse problem for forward-closed and convergent string
rewriting systems is also used to show that the \emph{cap problem}, an
important problem from the field of cryptographic protocol analysis~\cite{caps},
is also decidable for such string rewriting systems. This is in
contrast with some of our recent work that shows that the cap problem,
which is undecidable in general, remains undecidable when restricted
to general $LM$-Systems.

\ignore{All proofs in this extended abstract have been omitted for brevity. The
interested reader can consult the technical report for details\footnote{arxiv link}}

\section{Definitions}
We present here some notation and definitions. Only a few essential 
definitions are given here; for more details, the reader is 
referred to~\cite{BaaderNipkow} for term rewriting systems, and 
to~\cite{Botto} for string rewriting systems.

\medskip{} 

Let $\Sigma$ be a finite alphabet. As is usual, $\Sigma_{}^*$ stands
for the set of all strings over~$\Sigma$. The empty string is denoted
by~$\lambda$. For a string~$x$, $| x |$ denotes its length and $x^R$
denotes its reversal. A string $u$ overlaps with a string~$v$ iff there is a non-empty
\emph{proper} suffix of~$u$ which is a prefix of~$v$. For instance,
$aba$ overlaps with $acc$, but $aba$ does not overlap
with~$cca$. However, $aba$ overlaps with itself since $a$ is both a
prefix and a suffix of~$aba$. (See Fig~\ref{fig1}.)
\begin{figure}[h] 
\begin{center} 
\epsfig{file=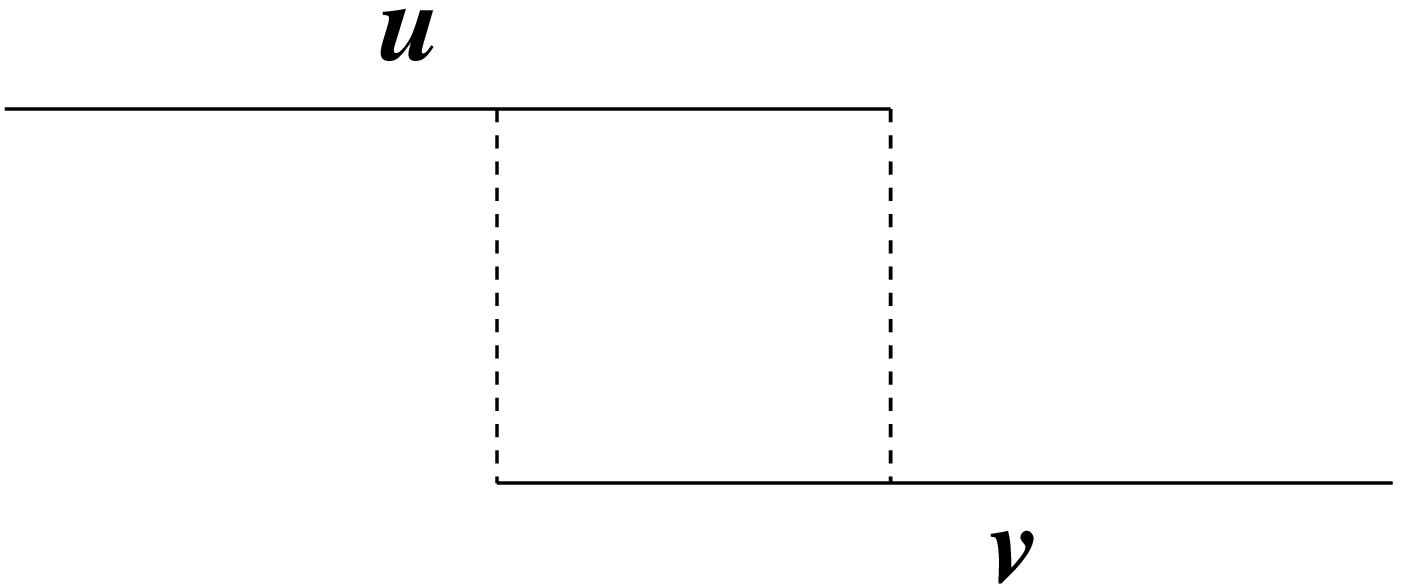, width=4.0in} 
\end{center}
\caption{overlap} \label{fig1}
\end{figure}

A string rewriting (rewrite) system 
(SRS) $R$ over this alphabet is a set of \emph{rewrite rules} of the form 
$l \to r$
where $l,r \in \Sigma_{}^*$; $l$ and $r$ are respectively called the left- 
and right-hand-side (\emph{lhs} and \emph{rhs}) of the rule. The
\emph{rewrite relation} on strings defined by the rewrite 
system $R$, denoted $\to_R$, is \[ \left\{ (xly , \, xry ) ~ \; | \; ~ x, y \in {\Sigma}_{}^*
\text{ and } (l, r) \in R \vphantom{b_b^a} \right\} \] The
reflexive and transitive closure of this relation is 
$\lrstep{*}{R}$.  An SRS $R$ is terminating iff there is no infinite chain 
of strings $s_i$, $i \in \mathbb{N}$, such that $s_i$ $R$-rewrites 
to $s_{i+1}$, that is to say $s_i \to_R s_{i+1}$.  
An SRS $R$ is \emph{confluent\/} iff for all strings $t$, $s_1$, $s_2$ such 
that $s_1 \rlstep{*}{R} t \lrstep{*}{R} s_2$  there exists a string $t'$ 
such that $s_1 \lrstep{*}{R} t' \rlstep{*}{R} s_2$.  An SRS $R$ 
is \emph{convergent\/} iff it is both terminating and confluent.

\medskip{} 
A string is irreducible with respect to~$R$ iff no rule of $R$ can be
applied to it.  The set of strings that are irreducible modulo $R$ is
denoted by $IRR(R)$. Note that this set is a regular language, since
$IRR(R) =  \Sigma_{}^* \smallsetminus 
  \{\Sigma_{}^* l_1 \Sigma_{}^* \,\cup ... \cup \, \Sigma_{}^* l_m
  \Sigma_{}^*\}$, 
where $l_1,\dots, l_m$ are the lhs of the rules in~$R$. 
A string $w'$ is an \emph{R-normal form\/} (or a
{\em normal form\/} if the rewrite system is obvious from the context)
of a string~$w$ for an SRS~$R$ if and only if $w \rightarrow_R^* w'$
and $w'$~is irreducible.  We write this
as $w \rightarrow_{R}^{!} w'$. An SRS $R$ is \emph{right-reduced} if
every right-hand side is in normal form.  An SRS~$T$ is said to be
{\em canonical\/} if and only if it is convergent and {\em
  inter-reduced,\/} i.e., it is right-reduced and,
besides, no lhs is a substring of another lhs.

\medskip{} 
Given a rewrite system $R$ and a set of strings $L$, $R_{}^*(L)$ 
is the set of all descendants of strings from~$L$, i.e., $
\{ x ~ | ~ \exists y \in L: \; y \rightarrow_{}^* x \}$, and
$R_{}^! (L)$ the set of normal forms of strings in $L$ for the rewrite 
system $R$.  Thus $R_{}^! (L)$ = $R_{}^* (L) \; \cap \; IRR(R)$.

String rewriting systems can be viewed as a restricted class of term 
rewriting systems where all functions are unary. As in~\cite{ANR}
a string $u$ over a given alphabet $\Sigma$ is viewed as a term over one variable 
derived from the {\em reversed\/} string of $u$; i.e., if $g, h \in \Sigma$, 
the string $gh$ corresponds to the term $h(g(x))$.  
(In other words, the unary operators defined by the symbols of a string 
are applied successively in the order in which these symbols appear in that 
string.) A string of the form~$wl$ where $w \in \Sigma_{}^*$ and
$l$ is a left-hand side is called a
\emph{redex}. 
A redex is \emph{innermost} if no proper prefix
of it is a redex.
The longest suffix of an innermost redex that is a left-hand
side in~$R$ is called its $l$-part and the remaining prefix is
referred to as its~$s$-part.

\medskip{} 
We will also need a special kind of normal form for strings, modulo 
any given SRS~$T$. With that purpose, we define, following 
S\'{e}nizergues~\cite{Seniz-PartialC}, a {\em leftmost-largest\/} reduction 
as follows: let~$\succ$ be a given total ordering on the alphabet~$\Sigma$ 
and ~$\succ_{L}^{}$ be its 
length~+~lexicographic extension\footnote{S\'enizergues  
refers to this as the {\em short-lex\/} ordering}.
A rewrite step $x l y \, \rightarrow \, x r y$ is {\em leftmost-largest\/}
if and only if 
(a)~$xl$ is an innermost redex, 
(b)~any other left-hand side
that is a suffix of~$xl$ is a suffix of~$l$ as well, (i.e., 
$l$~is the $l$-part of this redex) and
(c)~if $l \rightarrow r'$ is another
rule in the rewrite system, then $r' \succ_{L}^{} r$.
A string $w'$ is said to be a {\em leftmost-largest\/} ({\em ll-\/})
{\em normal form\/} of a string $w$ iff $w \rightarrow_{}^! w'$ using only 
leftmost-largest rewrite steps.
Given a terminating system~$T$, it holds that any string~$w$ 
has a \emph{unique} normal form produced by leftmost-largest 
rewrite 
steps alone, since every rewrite step is unique; 
this unique normal form will be denoted as $\rho_{T}^{} (w)$.

\ignore{
For a language~$L \subseteq \Sigma_{}^*$, $\widehat{\rho_{T}}(L)$
denotes the set of leftmost-largest normal forms of strings in~$L$,
i.e., \[ \widehat{\rho_T}(L) ~ = ~ \left\{ \, \rho_{T}^{} (w) ~ \left| ~
w \in L \, \vphantom{b^b} \right. \right\} \].  
}

Next, we define what it means for a string $x \in \Sigma^{+}$ 
to cause a subterm collapse. 

\noindent
\begin{definition}
Let $R$ be a convergent string rewriting system. A string~$x$ is said to
\emph{cause a subterm-collapse} if and only if there is a non-empty
string~$y$ such that $xy \rightarrow_R^* x$.
\end{definition}

\medskip{} 
Throughout the rest of the paper, $a, b, c, \dots, h$ will denote
elements of the alphabet $\Sigma$, and strings over $\Sigma$ will be
denoted as $l, r, u, v, w, x, y, z$, along with subscripts and 
superscripts.  

\ignore{
\medskip{} 
A  string rewrite system $T$ is said to be:
\begin{itemize}
\item[-] {\em special\/} iff the rhs of each rule in 
   $T$ is the empty string $\lambda$. \par 
\item[-] {\em monadic\/} iff the rhs of each rule in $T$ is either a single
 symbol or the empty string. \par 
\item[-] {\em dwindling\/} iff, for every rule $l \flr r$ in $T$, the rhs $r$ 
  is a {\em proper prefix\/} of its lhs~$l$. \par
\item[-] {\em optimally reducing\/}\footnote{For term
rewriting systems this notion was first introduced in~\cite{NPS}, and
has been generalized in~\cite{Comon-LundhD05}.} iff every rule in $T$ 
is optimally reducing,  i.e., iff the following holds:  
For every rule $ub \rightarrow v$ in $T$, with $u, v \in \Sigma_{}^*$,  
      $b \in \Sigma$, and for any string $z$, if $zv$ is reducible then 
      so is $zu$ (equivalently: if $zu$ is irreducible then so is $zv$). 
\item[-] \emph{forward-closed} iff every innermost redex can be reduced 
to its normal form \emph{in one step}.
\end{itemize}
}

\medskip{} 

A string rewrite system $T$ is said to be
\emph{forward-closed} iff every innermost redex can be reduced to its
normal form \emph{in one step}.

\ignore{
For instance, the string rewrite system $\{ ab \rightarrow ca \}$ is 
optimally reducing since, for any $x$, $xca$ is reducible if and only if 
$x$ itself is reducible. Similarly, the system $\{ ab \rightarrow ba , 
 ~ aa  \rightarrow a \}$
is optimally reducing too, though this requires a little more analysis.
On the other hand, $\{ ab \rightarrow b \}$ is not optimally
reducing since $aa$ is irreducible, but $ab$ is not.
}

We now give some preliminary results on convergent and forward-closed string rewriting systems. This first lemma shows that reducing all right-hand sides of rules in $R$ will preserve
the equivalence generated by $R$ as well as the properties that we are interested in. 

\begin{Lemma}
\label{RightReducedEquivalence}
Let $R$ be a convergent and forward-closed string
rewriting system, and let $l \rightarrow r$ be a rule in~$R$.
Then $\left(R \smallsetminus \left\{ l \rightarrow r \vphantom{l^b} \right\}\right) \cup
\left\{ l \rightarrow \rho_R^{}(r) \vphantom{l^b} \right\}$ is 
convergent, forward-closed and equivalent to~$R$.
\end{Lemma}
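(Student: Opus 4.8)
The plan is to modify one rule at a time, replacing $l \to r$ by $l \to \rho_R(r)$, and to verify each of the three claimed properties separately. Write $R'$ for the modified system. Since $r \to_R^* \rho_R(r)$, the new rule is a consequence of $R$, and conversely the old rule is recovered from $R'$ by rewriting the right-hand side; hence $\to_R^*$ and $\to_{R'}^*$ generate the same equivalence relation, which handles equivalence. For termination, I would argue that any $R'$-rewrite step can be simulated by one or more $R$-steps (the step using $l \to \rho_R(r)$ becomes $l \to r \to_R^* \rho_R(r)$, a nonempty sequence), so an infinite $R'$-derivation would yield an infinite $R$-derivation, contradicting termination of $R$.

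For confluence, the cleanest route is to observe that $R$ and $R'$ have the same set of left-hand sides, hence the same irreducible strings: $IRR(R) = IRR(R')$. Combined with termination of $R'$ and the fact that every $R'$-derivation extends to an $R$-derivation with the same normal forms (because $\rho_R(r)$ is already $R$-irreducible, and $R$ is confluent so normal forms are unique), one gets that each string has a unique $R'$-normal form, which by Newman's lemma (or directly) gives confluence of $R'$. Concretely: given $w$, its unique $R$-normal form $\hat w$ is reachable in $R'$ (simulate each $R$-step; a step $xly \to xry$ is replaced by $xly \to_{R'} x\rho_R(r)y \to_{R'}^* xry'$... here one must be a little careful, so instead I would simply note that $\hat w \in IRR(R') $ and $w \to_{R'}^* \hat w$ follows because $\rho_{R'}$ applied repeatedly must terminate at something in $IRR(R') = IRR(R)$, and any such string reachable from $w$ under $R'$ is also reachable under $R$, so equals $\hat w$ by confluence of $R$).

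For forward-closedness, let $wl'$ be an innermost redex of $R'$, where $l'$ is a left-hand side. Since the left-hand sides of $R$ and $R'$ coincide, $wl'$ is also an innermost redex of $R$, so by forward-closedness of $R$ it rewrites in one $R$-step to its $R$-normal form, say $wl' \to_R w r''$ where $l' \to r''$ is in $R$ and $wr''$ is $R$-irreducible. If the rule used is not $l \to r$, it is still present in $R'$ and $wr''$ is still irreducible (same $IRR$), so we are done. If it is $l \to r$, then in $R'$ we instead have $wl' \to_{R'} w\rho_R(r)$ in one step; I must check $w\rho_R(r)$ is $R'$-irreducible and is the $R'$-normal form of $wl'$. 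Irreducibility: $wr$ is $R$-irreducible by hypothesis (it was the $R$-normal form), hence $w$ is $R$-irreducible and $r$ is $R$-irreducible, so $\rho_R(r) = r$ and $w\rho_R(r) = wr$, which is in $IRR(R) = IRR(R')$. So in fact the replacement changes nothing for this redex.

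The main obstacle I anticipate is the bookkeeping around confluence: I must make sure that "unique $R'$-normal form for every string" is genuinely established, not merely plausible, and the subtle point is showing $w \to_{R'}^* \hat w$ where $\hat w$ is the $R$-normal form — one cannot naively simulate $R$-steps by $R'$-steps since the rule $l \to r$ is absent from $R'$. The fix is to run $R'$ to \emph{some} $R'$-normal form (possible since $R'$ terminates), observe that this normal form lies in $IRR(R') = IRR(R)$ and is $R$-reachable from $w$ (since every $R'$-step is an $R$-derivation), and then invoke confluence of $R$ to conclude it equals $\hat w$; uniqueness of the $R'$-normal form follows because any two $R'$-normal forms of $w$ are both equal to $\hat w$. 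Finally, I would note the full statement follows by iterating this one-rule replacement over all rules of $R$, since each replacement preserves all three properties and the rules can be processed independently (replacing $\rho_R(r)$ by $\rho_{R'}(r)$ is harmless as $\rho$ is unchanged across equivalent convergent systems with the same $IRR$).
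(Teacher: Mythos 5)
Your proof is correct and follows essentially the same route as the paper's: the key observation $IRR(R) = IRR(R')$, equivalence via joinability of $l$ and $r$ in both systems, termination by simulating each $R'$-step with one or two $R$-steps, confluence via uniqueness of normal forms pinned down by confluence of $R$ together with $IRR(R)=IRR(R')$, and forward-closure by transferring innermost redexes. You are in fact somewhat more careful than the paper on two points: you correctly flag that $\rightarrow_R^{*} \not\subseteq \rightarrow_{R'}^{*}$ in general (the paper's preliminary ``observation'' asserts this inclusion, though its argument only supports the reverse one), and you explicitly handle the forward-closure case in which the normalizing rule is the modified rule $l \rightarrow r$ itself, by observing that $r$ must then already be irreducible so that $\rho_R^{}(r) = r$.
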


\begin{proof}
Let $R' = \left(R \smallsetminus \left\{ l \rightarrow r
\vphantom{l^b} \right\}\right) \cup \left\{ l \rightarrow \rho_R^{}(r)
\vphantom{l^b} \right\}$ where $R$ is convergent and forward-closed.
We make a few observations first. First of all,
since $R'$ contains the same left-hand sides as~$R$,
$IRR(R') = IRR(R)$. 
The set of redexes are the same too. Besides,
it is not hard to see that $\rightarrow_R^* \, \subseteq \, \rightarrow_{R'}^*$
since $l \rightarrow_R^{} r \rightarrow_R^{*} \rho_R^{}(r)$ for all rules~$l \rightarrow r$
in~$R$.

We first show that $R'$ and $R$ are equivalent. This is straightforward
since for every rule~$l \rightarrow r \in R$, $l$ and $r$ are
joinable modulo~$R'$ and vice versa. 
Thus~$\leftrightarrow_R^{*} \; = \; \leftrightarrow_{R'}^{*}$.

We next show that $R'$ is terminating given that $R$ is convergent.
For the sake of deriving a contradiction, assume that $R'$ is not
terminating. Then $ \exists t \in \Sigma^{*} : (t_i)_{i =
  0}^{\infty}$ and $t_i \rightarrow_{R'} t_{i+1}$ where $t_0 = t$.
Consider any $t_i \rightarrow_{R'} t_{i+1}$ step in the above
sequence. Then, by definition of reduction, there must be a rule $l
\rightarrow r \in R'$ such that: \[ t_i = xly \rightarrow xry = t_{i+1} \] Since 
no left-hand sides of rules in $R$ were altered in the
construction of $R'$, we can apply a corresponding rule in $R$. If the
rule $l \rightarrow \rho_{R}(r)$ was used, then we could replace the
above step with at most two reduction steps. Thus, we could construct
an infinite descending chain modulo~$R$, which is a contradiction.

Next, we show that $R'$ is confluent. Suppose it is not. Then
since $R'$ is terminating,
there must be a string~$t$ with two distinct normal forms
$t_1^{}$ and $t_2^{}$. But since $R$ is confluent and equivalent
to~$R'$, one of $t_1^{}$ and $t_2^{}$ must be \emph{reducible}
modulo~$R$. This is clearly a contradiction since
$IRR(R) = IRR(R')$.

Thus, $R'$ is convergent given that $R$ is convergent. 

It remains to show that $R'$ is forward-closed.
For this it is enough to
show that every innermost redex can be reduced to its normal form in a
\emph{single} reduction step. Let $x = x'l$ be an innermost redex modulo~$R'$
where $x, x' \in \Sigma^{*}$. Then $x$ is
also an innermost redex modulo~$R$. Since $R$ is forward-closed $x'r
\in IRR(R)$ for $l \rightarrow r \in R$. Thus, $x'r \in IRR(R')$ as
well, 
again since $IRR(R) = IRR(R')$.
\ignore{
Finally, to see that $R'$ is equivalent to $R$ to suffices to show
that $s \downarrow_{R} t \leftrightarrow s \downarrow_{R'} t$ where
$s, t \in \Sigma^{*}$ However, the same argument used above applies,
and we can simulate the reduction steps in both cases by simply
replacing the rules used modulo $R$ with the corresponding ones
modulo~$R'$.
}
\end{proof}

We next show that no left-hand sides of rules of a 
forward closed and convergent string-rewriting system
can be the same. 

\begin{corollary}
\label{DistinctLHS}
Let $R$ be a convergent, forward-closed and right-reduced string
rewriting system. Then no two distinct rules have the 
\emph{same} left-hand side.
\end{corollary}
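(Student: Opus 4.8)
The plan is to argue by contradiction, driving everything from convergence together with the right-reduced hypothesis; I will note up front that forward-closedness is not actually needed, so the statement holds under the weaker assumptions "convergent and right-reduced." Suppose $R$ contains two \emph{distinct} rules $l \rightarrow r_1$ and $l \rightarrow r_2$ sharing the left-hand side $l$. Distinctness forces $r_1 \neq r_2$, and from the single string $l$ we obtain two one-step reductions $l \rightarrow_R r_1$ and $l \rightarrow_R r_2$.

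The key step is then to invoke right-reducedness: every right-hand side of $R$ is already in normal form, so $r_1, r_2 \in IRR(R)$. Hence $l \rightarrow_R^{!} r_1$ and $l \rightarrow_R^{!} r_2$, i.e., each of $r_1$ and $r_2$ is a normal form of $l$. Since $R$ is convergent (in particular confluent and terminating), every string has a \emph{unique} normal form; therefore $r_1 = r_2$, contradicting $r_1 \neq r_2$. Thus no two distinct rules of $R$ can have the same left-hand side.

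The only point requiring care is the role of the right-reduced hypothesis: it is precisely what guarantees that one rewrite step out of $l$ already lands in $IRR(R)$, so that uniqueness of normal forms applies immediately, with no further reduction. I expect this to be the whole content of the argument and anticipate no real obstacle. For contrast, a proof leaning on forward-closedness instead (and dropping right-reducedness) would be more delicate, since $l$ need not occur as an \emph{innermost} redex; the route above sidesteps this. One could also try to derive the statement as a corollary of Lemma~\ref{RightReducedEquivalence}, but that lemma only rewrites right-hand sides and leaves left-hand sides untouched, so it does not shorten anything — the self-contained contradiction above is the cleanest path.
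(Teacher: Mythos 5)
Your proof is correct and is essentially the same argument as the paper's: both derive two one-step reductions $l \rightarrow r_1$ and $l \rightarrow r_2$, use right-reducedness to conclude that $r_1$ and $r_2$ are irreducible, and then contradict confluence (uniqueness of normal forms). Your side remark that forward-closedness is not needed is accurate — the paper's proof does not use it either.
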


\begin{proof}
 Suppose not. Let $l_i \rightarrow r_i \in R$ for $i \in \left\{1, 2\right\}$ 
 such that $l_1 = l_2$ but $r_1 \not = r_2$, but then $l \rightarrow r_1$ and $l \rightarrow r_2$
 as trivial reductions would not be joinable, as $r_1$ and $r_2$ are in normal form. 
\end{proof}

The next preliminary result shows that 
we can use leftmost-largest reduction 
steps to reduce an innermost redex to
its normal form in a single step. 

\begin{Lemma}
\label{LLSingleStep}
Let $R$ be a convergent, forward-closed and right-reduced string
rewriting system, and let $w$ be an innermost redex. Then
$w \rightarrow \rho_{R}^{} (w)$, i.e., $w$ reduces to its
normal form in \emph{one} leftmost-largest reduction step.
\end{Lemma}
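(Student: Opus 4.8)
The plan is to pin down the unique leftmost-largest step issuing from $w$ and then show that it already lands in $IRR(R)$, so that this single step is a complete leftmost-largest reduction. Write $w = s\,l'$ where $l'$ is the $l$-part of the innermost redex $w$ (the longest suffix of $w$ that is a left-hand side) and $s$ is its $s$-part. First I would record two easy facts. (i) Every \emph{proper} prefix of $w$ lies in $IRR(R)$: if a proper prefix $u$ of $w$ contained a left-hand side, then some prefix of $u$ — hence a proper prefix of $w$ — would be a redex, contradicting that $w$ is innermost. In particular $s$, and more generally $s$ followed by any proper prefix of $l'$, is irreducible, and so is every substring of such a string. (ii) By Corollary~\ref{DistinctLHS} there is exactly one rule in $R$ with left-hand side $l'$, say $l' \to r'$, and $r' \in IRR(R)$ since $R$ is right-reduced. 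Consequently conditions (a)--(c) of the definition of a leftmost-largest step are all satisfied by $w = s\,l' \to s\,r'$: $w$ is an innermost redex; any other left-hand side occurring as a suffix of $w$ is strictly shorter than $l'$ and hence a suffix of $l'$; and there is no competing rule $l' \to r''$. So this is the \emph{unique} leftmost-largest step from $w$, and it suffices to prove $s\,r' \in IRR(R)$; uniqueness of the leftmost-largest normal form then forces $\rho_R^{}(w) = s\,r'$.

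To prove $s\,r' \in IRR(R)$ I would invoke forward-closedness. Since $w$ is an innermost redex, any one-step reduction of $w$ must occur at a suffix of $w$ (otherwise a proper prefix of $w$ would already be a redex). Hence forward-closedness yields a left-hand side $m$ that is a suffix of $w$, with $w = x\,m \to x\,r_m = \widehat{w} \in IRR(R)$ for the rule $m \to r_m \in R$. Because $l'$ is the \emph{longest} suffix of $w$ that is a left-hand side, $m$ must be a suffix of $l'$; write $l' = p\,m$, so $x = s\,p$. If $p = \lambda$ then $m = l'$, so $r_m = r'$ by Corollary~\ref{DistinctLHS} and $s\,r' = \widehat{w} \in IRR(R)$, as wanted.

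The remaining case $p \neq \lambda$ is the crux. Here $l' = p\,m$ with both $l'$ and $m$ left-hand sides, so reducing $l'$ in two ways — at the whole word via $l' \to r'$, and at the suffix $m$ via $m \to r_m$ — produces the joinable pair $(r',\,p\,r_m)$; since $r'$ is irreducible and $R$ is confluent, $p\,r_m \rightarrow_R^* r'$. I claim $p\,r_m = r'$. If not, $p\,r_m$ is reducible; but $p$ is irreducible (a substring of a proper prefix of $w$) and $r_m$ is irreducible (as $R$ is right-reduced), so any left-hand side occurring in $p\,r_m$ must straddle the boundary, i.e.\ have the form $n_1 n_2$ with $n_1$ a nonempty suffix of $p$ and $n_2$ a nonempty prefix of $r_m$. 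Then $n_1$ is also a suffix of $s\,p$ and $n_2$ a prefix of $r_m$, so $n_1 n_2$ occurs in $s\,p\,r_m = \widehat{w}$, contradicting $\widehat{w} \in IRR(R)$. Hence $p\,r_m = r'$, and once more $s\,r' = \widehat{w} \in IRR(R)$. I expect the only genuine subtlety to be this straddling-redex argument together with the bookkeeping that the pieces $p$ and $r_m$ really are irreducible; everything else falls out immediately from the definitions, Corollary~\ref{DistinctLHS}, and confluence.
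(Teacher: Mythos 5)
Your proof is correct and follows the same basic strategy as the paper's: use forward-closedness to obtain a one-step reduction of $w$ to its normal form, identify the leftmost-largest step via the $l$-part and Corollary~\ref{DistinctLHS}, and reconcile the two reducts using confluence. You go further than the paper in one respect: the paper simply asserts that the reduct of the competing step lies in $IRR(R)$ and then invokes convergence, whereas you actually prove that the leftmost-largest reduct is irreducible by showing $p\,r_m = r'$ — any redex in $p\,r_m$ would have to straddle the boundary between the irreducible pieces $p$ and $r_m$ and would therefore also occur in the irreducible string $\widehat{w} = s\,p\,r_m$. Since forward-closedness only guarantees that \emph{some} one-step reduct of an innermost redex is the normal form, not that \emph{every} one-step reduct is, your straddling argument fills in precisely the step the paper's terse contradiction glosses over; your version is the more complete one.
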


\begin{proof}
  Let $w \in \Sigma^{*}$ be an innermost redex. Then $w = w'l$ for $w'
  \in \Sigma^{*}$ and by forward closure there must be some rule $l
  \rightarrow r \in R$ such that $l \rightarrow r$ reduces $w$ to its
  normal form in a single step. If this were not a leftmost-largest
  reduction, then there must be some other rule $l' \rightarrow r'
  \in R$ such that $w = w''l'$ is also an innermost-redex. By
  Corollary~\ref{DistinctLHS}, $l$ must be a proper suffix of $l'$ and
  $l'$ must be unique, then $w \rightarrow w''r' \in IRR(R)$ and $w
  \rightarrow w'r \in IRR(R)$, which contradicts the convergence
  of~$R$.
\end{proof}


\section{LM-Conditions for String Rewriting Systems}

We now give an equivalent definition of \emph{quasi-determinism} for
string rewriting systems $R$. This definition is adapted from that of~\cite{LynchMorawska}. 
We also define a \emph{right-hand side critical pair} for string-rewriting systems. Thus,
we are able to formulate the conditions of~\cite{LynchMorawska} in the context of string rewriting systems. 

\noindent
A string rewriting system $R$ is \emph{quasi-deterministic} if 
and only if
\begin{enumerate}
    \item No rule has $\lambda$ as its right-hand side

    \item No rule in $R$ is \emph{end-stable}---i.e., no rule has the
        same rightmost symbol on its left- and right-hand sides, and

    \item $R$ has no \emph{end pair repetitions}---i.e., no two rules in
        $R$ have the same unordered pair of rightmost symbols on their sides.
\end{enumerate}

\noindent
We define a \emph{right-hand-side critical pair} as follows: if
$l_1^{} \rightarrow r_1^{}$ and $l_2^{} \rightarrow r_2^{}$ are two
distinct rewrite rules and $r_2^{} = x r_1^{}$ for some~$x$ (i.e.,
$r_1^{}$ is a suffix of $r_2^{}$) then 
$\left\{ x l_1^{} , \, l_2^{} \right\}$ is a right-hand-side critical
pair. The set of all right-hand-side critical pairs is referred to as $RHS(R)$.

\medskip{}
It can be shown~\cite{NotesOnBSM} that

\begin{Lemma}
    Suppose $R$ is a convergent quasi-deterministic string rewriting system.
    Then $RHS(R)$ is not quasi-deterministic if and only
    if $RHS(R)$ has an end pair repetition.
    \label{lemma-quasi}
\end{Lemma}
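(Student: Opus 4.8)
The plan is to prove the ``if'' direction trivially and to obtain the ``only if'' direction by showing that the first two defining conditions of quasi-determinism hold for $RHS(R)$ automatically, so that the only possible way $RHS(R)$ can fail to be quasi-deterministic is through an end pair repetition. For the ``if'' direction there is nothing to do: an end pair repetition is literally a violation of condition~3 in the definition of quasi-determinism, so $RHS(R)$ is not quasi-deterministic.

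For the ``only if'' direction I would first record two easy facts about $R$. Every left-hand side of a rule of $R$ is non-empty: a rule $\lambda \rightarrow r$ with $r \neq \lambda$ would break termination, while $\lambda \rightarrow \lambda$ is forbidden by condition~1; and every right-hand side of a rule of $R$ is non-empty, again by condition~1 for $R$. Writing $\mathrm{last}(u)$ for the rightmost symbol of a non-empty string $u$, it follows that every member of a pair in $RHS(R)$ is of the form $x l_1$ or $l_2$ with $l_1, l_2$ left-hand sides of $R$, hence non-empty, and moreover $\mathrm{last}(x l_1) = \mathrm{last}(l_1)$. This already gives condition~1 for $RHS(R)$, however the pairs are oriented as rules (and conditions~2 and~3 are insensitive to orientation anyway).

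Next I would verify that no pair in $RHS(R)$ is end-stable, which is condition~2. Take a pair $\{x l_1, l_2\}$ coming from two \emph{distinct} rules $l_1 \rightarrow r_1$ and $l_2 \rightarrow r_2$ of $R$ with $r_2 = x r_1$. Since $r_1$ is non-empty it is a non-empty suffix of $r_2$, so $\mathrm{last}(r_1) = \mathrm{last}(r_2)$; call this common symbol $c$. If the pair were end-stable we would have $\mathrm{last}(x l_1) = \mathrm{last}(l_2)$, i.e.\ $\mathrm{last}(l_1) = \mathrm{last}(l_2)$ by the previous paragraph; but then the two distinct rules $l_1 \rightarrow r_1$ and $l_2 \rightarrow r_2$ would carry the \emph{same} unordered pair of rightmost symbols, namely $\{\mathrm{last}(l_1), c\} = \{\mathrm{last}(l_2), c\}$, which is an end pair repetition in $R$ and contradicts condition~3 of the quasi-determinism of $R$. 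So condition~2 holds for $RHS(R)$. Combining the two paragraphs: if $RHS(R)$ is not quasi-deterministic, the failure can only lie in condition~3, i.e.\ $RHS(R)$ has an end pair repetition; together with the trivial converse this finishes the proof.

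I expect the only real subtlety to be in the end-stability step, specifically the fact that one cannot assume $x l_1 \neq l_2$ unless $R$ is additionally inter-reduced; the argument above is deliberately phrased so that it also covers the degenerate case $x l_1 = l_2$, since end-stability still forces $\mathrm{last}(l_1) = \mathrm{last}(l_2)$ and the same contradiction with the no-end-pair-repetition condition on $R$ applies. It is also worth noting that the argument uses only termination of $R$ (via non-emptiness of left-hand sides) together with the quasi-determinism hypothesis; confluence of $R$ is not needed.
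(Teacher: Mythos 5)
Your proof is correct. Note that the paper itself does not prove Lemma~\ref{lemma-quasi} at all --- it simply cites~\cite{NotesOnBSM} --- so there is no in-paper argument to compare against; your write-up supplies the missing details. The structure you chose (conditions~1 and~2 of quasi-determinism hold automatically for $RHS(R)$, so only an end pair repetition can cause failure) is the natural one, and the key step --- that end-stability of a pair $\{xl_1, l_2\}$ would force the two parent rules of $R$ to share the unordered end pair $\{\mathrm{last}(l_1),\mathrm{last}(r_1)\}=\{\mathrm{last}(l_2),\mathrm{last}(r_2)\}$, contradicting condition~3 for $R$ --- is exactly right; your handling of the degenerate cases ($x=\lambda$, $xl_1=l_2$, or $l_1=l_2$ with $r_1\neq r_2$) and your observation that only termination and quasi-determinism of $R$ are used are both sound.
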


\noindent
A string-rewriting system is \emph{deterministic} if and only if it is
non-subterm-collapsing and $RHS(R)$ is quasi-deterministic.\\

\noindent
A \emph{Lynch-Morawska string rewriting system} or \emph{LM-system}
is a convergent right-reduced string rewriting system~$R$ which satisfies
the following conditions:
\begin{itemize}
\item[(i)]   $R$ is non-subterm-collapsing, \\[-18pt]
\item[(ii)]  $R$ is forward-closed, and \\[-18pt]
\item[(iii)] $RHS(R)$ is quasi-deterministic. \\[-10pt]
\end{itemize}

\medskip{} 
In light of the results of~\cite{notesOnLmSystems}, a convergent 
string rewriting system~$R$ is an LM-system if and only if
$RHS(R)$ is quasi-deterministic and
\begin{itemize}
\item[(a)] $R$ is almost-left reduced (see~\cite{notesOnLmSystems}).
\item[(b)] There are no overlaps among the left-hand sides of~$R$.
\item[(c)] No lhs overlaps with a rhs.
\end{itemize}

We now work towards proving the main results of this paper. Namely, we will
show in the sequel below that the subterm-collapse problem for convergent and
forward-closed string rewriting systems is decidable.

The first of our results towards the above goal is below:

\begin{Lemma}
\label{NormalFormSeq}
Let $R$ be a convergent, forward-closed and
quasi-deterministic string rewriting system and $x, y, z \in IRR(R)$ such that
$xy \rightarrow_{}^! z$. Then there exist irreducible strings
$x = x_1^{}, x_2^{}, \ldots , x_{n}^{}, x_{n+1}^{}$,
$y_1^{}, y_2^{}, \ldots , y_n^{}, y_{n+1}^{}$ such that
\begin{enumerate}
\item $y = y_1^{} \ldots y_{n+1}^{}$,
\item $x_i^{} y_i^{}$ is an innermost redex for all $1 \le i \le n$,
\item $x_i^{} y_i^{} \rightarrow x_{i+1}^{} \;$ for all $1 \le i \le n$, $\; \;$ \emph{and}
\item $x_{n+1}^{} y_{n+1}^{} = z$.
\end{enumerate}
\end{Lemma}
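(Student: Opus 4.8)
The plan is to induct on the length of the reduction $xy \rightarrow^! z$, peeling off \emph{innermost} rewrite steps one at a time and tracking how the suffix $y$ gets consumed. The key structural fact I would rely on is that since $x \in IRR(R)$, any redex occurring in a string of the form $x_i y_i$ (with $x_i$ irreducible) must ``reach past'' $x_i$ into the $y_i$-part; more precisely, if $u$ is irreducible and $uv$ is reducible, then $uv$ contains an innermost redex whose $l$-part overlaps the $v$-side, so $uv$ can be written as (irreducible prefix)(left-hand side)(rest) where the left-hand side starts strictly before the end of $u$ only if it is not already contained in the $v$ portion --- in any case the innermost redex is a prefix $x_i \cdot (\text{something from } y_i)$. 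Because $R$ is forward-closed (and right-reduced, so Lemma~\ref{LLSingleStep} applies), that innermost redex reduces to its \emph{normal form} in a single step.

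Concretely, I would argue as follows. If $xy$ is already irreducible, then $xy = z$, and we take $n = 0$, $x_1 = x$, $y_1 = y$; conditions (1) and (4) hold and there is nothing to check for (2),(3). Otherwise $xy$ is reducible. Since $x$ is irreducible, the \emph{leftmost} (hence innermost) redex of $xy$ is a prefix of $xy$ of the form $x_1' $ where... --- here I need the precise claim: write $xy$ and consider its leftmost innermost redex $w$. Then $w$ is a prefix of $xy$; since $x$ is irreducible, $w$ is not a prefix of $x$, so $w = x u_1$ where $u_1$ is a nonempty prefix of $y$, say $y = u_1 y'$. Actually I should be slightly more careful: the leftmost innermost redex need not start at position $0$; but its $s$-part is then a prefix, the whole redex is a prefix of $xy$, and irreducibility of $x$ forces the redex to extend beyond $|x|$. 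Set $x_1 = x$ and let $y_1$ be that nonempty prefix $u_1$ of $y$ so that $x_1 y_1$ is exactly this innermost redex. By forward-closedness, $x_1 y_1 \rightarrow x_2$ in one step with $x_2 \in IRR(R)$. Now the remainder of $y$ is $y' = y_2 \cdots$, and the reduction continues from $x_2 y'$; by confluence/uniqueness of normal forms, $x_2 y' \rightarrow^! z$, and this reduction is strictly shorter, so induction applies and yields $x_2 = x_2, x_3, \ldots, x_{n+1}$ and $y_2, \ldots, y_{n+1}$ with $y' = y_2 \cdots y_{n+1}$. Prepending $x_1, y_1$ gives the full decomposition, with $y = y_1 y' = y_1 y_2 \cdots y_{n+1}$.

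The main obstacle --- and the step I would spend the most care on --- is the claim that when $x$ is irreducible and $xy$ is reducible, the leftmost innermost redex of $xy$ is a prefix of $xy$ that properly contains $x$, i.e.\ it has the shape $x$ followed by a nonempty prefix of $y$. The ``leftmost innermost redex is a prefix'' part is immediate from the definition of innermost redex (a redex with no proper prefix that is a redex, and leftmost so its $s$-part is empty... actually the cleanest route is: take the shortest prefix of $xy$ that is a redex; it is innermost by minimality). That it extends past $|x|$ is exactly irreducibility of $x$: if it were a prefix of $x$, then $x$ itself would contain a redex. One subtlety is whether ``redex'' here means ``ends in an lhs'' (the paper's definition: a string $wl$ with $l$ an lhs) versus ``contains an lhs as a substring''; with the paper's definition a prefix of $xy$ that is a redex is precisely a prefix ending exactly where some occurrence of an lhs ends, which is the right notion for the induction. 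I would also note explicitly that quasi-determinism is not actually needed for this lemma as I've set it up --- only convergence, forward-closedness, and right-reducedness (via Lemma~\ref{LLSingleStep}) --- but since it is in the hypotheses there is no harm, and it may be that the intended proof uses it to pin down \emph{which} rule fires; I would mention this so the reader is not looking for where quasi-determinism is used.

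Finally, termination of the induction is guaranteed because the length of the rewrite sequence $x_i y_i \cdots \rightarrow^! z$ strictly decreases at each stage (we remove at least the one step $x_i y_i \rightarrow x_{i+1}$), so after finitely many stages the current string is irreducible and equals $z$, giving the final indices $x_{n+1}, y_{n+1}$ with $x_{n+1} y_{n+1} = z$ as required in (4).
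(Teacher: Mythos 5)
Your proposal is correct and follows essentially the same route as the paper's proof: induction on the length of the reduction, identifying the shortest (hence innermost) prefix redex of $x_iy_i$, which must cross the boundary into the $y$-part because $x_i$ is irreducible, and using forward-closedness to reduce it to an irreducible $x_{i+1}$ in a single step before recursing on the remaining suffix of $y$. Your added observations (the trivial $n=0$ case and the fact that quasi-determinism is not actually used) are accurate but do not change the argument.
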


\begin{proof}
The proof proceeds by induction on the number of rewrite steps along
the path from $xy$ to the normal form $z$. \\ \\
\noindent
\textbf{\underline{\underline{Basis.}}} Suppose $x, y, z \in IRR(R)$
such that $xy \rightarrow^{!} z$ in $k = 1$ steps.  That is, $xy
\rightarrow z$. Since $x, y \in IRR(R)$ there cannot be a redex that
is a substring of either $x$ or $y$ alone.  Hence there must be
strings $x', y' \in \Sigma^{*}$ and $l_1, l_2 \in \Sigma^+$ such
that \[ x = x'l_1, ~ y = l_2y' \] and $l_1l_2 = l$ for some $l
\rightarrow r \in R$.  Note that, since $R$ is convergent we may
assume that $x'l_1l_2$ is the shortest such redex.
\begin{figure}[h] 
\begin{center} 
\epsfig{file=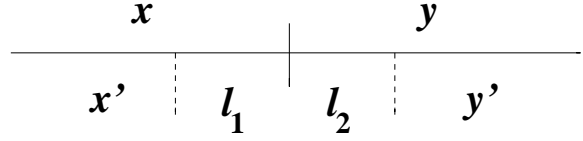, width=3.0in} 
\end{center}
\caption{first step} \label{fig2}
\end{figure}


$\therefore$ We can construct the following sequence: $x_1 = x'l_1, \;
x_2 = x'r, \; y_1 = l_2, \; y_2 = y'$ such that,
\begin{itemize}
	\item[(1)] $y = y_1y_2 = l_2y'$ 
	\item[(2)] $x_1y_1 = x'l_1l_2 = x'l$ is an innermost redex
	\item[(3)] $x_1y_1 \rightarrow x_2 = x'r$ 
	\item[(4)] $x_2y_2 = x'ry' = z$ 
\end{itemize}

\noindent
Above, $(1)$ and $(3)$ follow immediately from the definitions of
$x_1, y_1, x_2, y_2$. $(4)$ will follow after establishing
$(2)$. However, since $x'l_1l_2$ was chosen as the shortest such redex
appearing in $xy$ from the left and crossing the boundary between $x$
and~$y$, it must be an innermost redex. Thus, we have
established~$(2)$.

Now, since $R$ is forward-closed, $x'r$ can be assumed to be in normal
form.  Since $y'$ is a proper suffix of $y \in IRR(R)$, we get that
$y' \in IRR(R)$. Note that the above reductions are leftmost-largest.
Since every string has a unique leftmost-largest normal form modulo a
terminating string-rewriting system, and since $R$ is convergent, this
normal form must be $z$. \\

\textbf{\underline{\underline{Inductive Hypothesis.}}} Assume that the
result holds for all $x, y, z \in IRR(R)$ such that $xy
\rightarrow^{!} z$ in $k > 1$ steps. We show that it holds for strings
$x, y, z \in IRR(R)$ such that $xy \rightarrow^{!} z$ in $k+1$ steps.

Since $k > 1$, $\exists w \in \Sigma^+$ such that $xy \rightarrow w
\rightarrow^{!} z$. Note that $w \rightarrow^{!} z$ must take exactly
$k$ rewrite steps. As in the base case, since $x, y \in IRR(R)$ and
$xy$ is reducible, we have that $xy = x'l_1l_2y'$ where $x = x'l_1$,
$y = l_2y'$, and $l_1l_2 = l$ for some $l \rightarrow r \in R$ and
$x', y' \in \Sigma^*$. Since $R$ is convergent, we assume that $x'l$
is the leftmost prefix of $xy$ that is a redex.

We thus form the sequence: $x_1 = x'l_1, \; y_1 = l_2, \; x_2 = x'r,
\; y_2 = y'$. Since $x_1y_1$ is the leftmost redex of $xy$ it must be
the case that $x_1y_1$ is an innermost redex. Therefore, $x'r$ can be
assumed to be in normal form. Then $w = x'ry'$, and since $x'r \in
IRR(R)$ and $y \in IRR(R)$ we get that $w = uv$ for some $u, v \in
IRR(R)$. We can then apply the induction hypothesis to $u, v$, and $z$
to fill in the rest of the sequence with the desire properties.

Therefore we can conclude that the result holds for all $x, y, z \in IRR(R)$ such that $xy \rightarrow^{!} z$.  
\end{proof}

\ignore{
\noindent
\begin{definition}
Let $R$ be a convergent string rewriting system. A string~$x$ is said to
\emph{cause a subterm-collapse} if and only if there is a non-empty
string~$y$ such that $xy \rightarrow_R^* x$.
\end{definition}
}

An immediate consequence of the definition of subterm-collapse
given below. 

\medskip{}
\begin{Lemma}
\label{PrefixSubtermCollapse}
Let $R$ be a convergent forward-closed string rewriting system and $x,
y \in IRR(R)$ such that $xy \rightarrow_{}^! x$ and $y \neq \lambda$.
(Thus $x$ causes a subterm-collapse.) Let $y_1^{}$ be a prefix of~$y$.
Then $xy_1^{}$ causes a subterm-collapse.
\end{Lemma}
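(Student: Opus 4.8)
The plan is to use Lemma~\ref{NormalFormSeq} to decompose the derivation $xy \rightarrow^{!} x$ into a sequence of single-step innermost-redex reductions, and then to truncate that sequence at the right place. Write $y = y_1 y_1'$ where $y_1$ is the given prefix; the goal is to exhibit a non-empty string $w$ with $x y_1 \rightarrow^{!} x$ (which, together with the fact that $x \in IRR(R)$, gives the subterm-collapse). Applying Lemma~\ref{NormalFormSeq} to $x, y, x$ (note $x,y \in IRR(R)$ and $xy \rightarrow^{!} x$), we obtain irreducible strings $x = x_1, x_2, \ldots, x_{n+1}$ and $y_1^{(1)}, \ldots, y_{n+1}^{(1)}$ with $y = y_1^{(1)} \cdots y_{n+1}^{(1)}$, each $x_i y_i^{(1)}$ an innermost redex reducing in one step to $x_{i+1}$, and $x_{n+1} y_{n+1}^{(1)} = x$.

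Now I would locate the index $j$ at which the given prefix $y_1$ "ends" relative to this factorization of $y$: let $j$ be least such that $y_1$ is a prefix of $y_1^{(1)} \cdots y_j^{(1)}$, and write $y_1^{(1)} \cdots y_j^{(1)} = y_1 \, t$ for the appropriate (possibly empty) remainder $t$. The key claim is that the first $j-1$ reduction steps of the sequence apply verbatim to $x y_1$: indeed $x_1 y_1^{(1)} \cdots x$-prefixes are unaffected since the redexes $x_i y_i^{(1)}$ for $i < j$ lie entirely within the portion of $y$ covered by $y_1^{(1)} \cdots y_{j-1}^{(1)}$, which is a prefix of $y_1$. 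After these steps we reach $x_j \, (\text{tail})$, where the tail is $y_j^{(1)}$ truncated to what remains inside $y_1$, i.e. $x y_1 \rightarrow^{*} x_j z$ where $z$ is the prefix of $y_j^{(1)}$ such that $y_1 = y_1^{(1)}\cdots y_{j-1}^{(1)} z$. I then need $x_j z \rightarrow^{!} x$; since $x_j y_j^{(1)}$ is an innermost redex reducing to $x_{j+1}$, and $z$ is a prefix of $y_j^{(1)}$, one shows $x_j z$ is itself a redex (or already suitable) and continues the reduction, eventually reaching $x_{n+1} \cdot (\text{something})$ which should normalize to $x$. Using convergence, any normal form of $x y_1$ equals any other, so it suffices to produce one reduction $x y_1 \rightarrow^{*} x$.

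The main obstacle is the handling of the "boundary" step at index $j$: when the prefix $y_1$ cuts through the middle of the block $y_j^{(1)}$, the string $x_j z$ need not itself be an innermost redex, and even if it is, its reduction may not line up cleanly with the original sequence. I expect the cleanest fix is to argue directly that $x y_1$ is reducible (it contains $xy_1^{(1)}\cdots y_{j-1}^{(1)}$ as a prefix which already reduces, or, if $j=1$, one must show $x z$ with $z$ a non-empty prefix of $y_1^{(1)}$ is reducible — here forward-closedness and the innermost-redex structure of $x y_1^{(1)}$ are used), take any normal form $x'$ of $x y_1$, and then observe $x' y_1' $ is reachable from $xy$ hence normalizes to $x$; combined with $x' \in IRR(R)$ this forces $x'$ to be a prefix of $x$ of the form making $x y_1$ subterm-collapsing. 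Alternatively, and perhaps more simply, one can induct on $|y_1'|$: the case $y_1' = \lambda$ is trivial, and for the inductive step peel one symbol off the end of $y_1$, so it is enough to treat $y_1$ of the form (shorter prefix)$\cdot a$; this reduces the boundary analysis to a single-symbol move and makes the bookkeeping with Lemma~\ref{NormalFormSeq} manageable.
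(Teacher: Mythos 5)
There is a genuine gap here, and also a mismatch between what you set out to prove and what the lemma actually requires. By the paper's definition, $xy_1$ causes a subterm-collapse iff there is a \emph{non-empty} string $w$ with $xy_1 w \rightarrow^{*} xy_1$; it is not about $xy_1$ reducing to $x$ or to any normal form. Your plan instead analyzes how $xy_1$ normalizes, and even granting that analysis you never exhibit such a $w$: knowing $xy_1 \rightarrow^{*} x'$ for some irreducible $x'$ (even one that is a prefix of $x$) does not by itself produce a string that, appended to $xy_1$, rewrites back to $xy_1$. On top of that, the technical core of your plan --- truncating the Lemma~\ref{NormalFormSeq} sequence at the block where $y_1$ ends --- is exactly the step you flag as an ``obstacle'' and leave unresolved: when $y_1$ cuts through the middle of $y_j^{(1)}$, the string $x_j z$ need not be a redex at all, and the induction on the length of the remaining suffix that you sketch as an alternative is not carried out. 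So the proposal as written is not a proof.

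The intended argument is a two-line observation that needs none of this machinery. Write $y = y_1 y'$ and set $w = y' y_1$, which is non-empty since $y \neq \lambda$. Then $xy_1 w = x y_1 y' y_1 = (xy) y_1$, and since the rewrite relation is closed under appending a right context, $xy \rightarrow^{*} x$ gives $(xy)y_1 \rightarrow^{*} x y_1$. Hence $xy_1 w \rightarrow^{*} xy_1$, i.e., $xy_1$ causes a subterm-collapse. Note that neither convergence, nor forward-closure, nor Lemma~\ref{NormalFormSeq} is needed here --- only closure of $\rightarrow_R^{}$ under right concatenation.
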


\begin{proof}
Let $x, y \in IRR(R)$ such that $xy \rightarrow^{!} x$. 
Let $\tilde{y}$ be any prefix of $y$. Thus $y = \tilde{y} y'$ for
some string~$y'$.

In order to generate a subterm-collapse with respect to $x\tilde{y}$, we must have 
a string $w \in \Sigma^+$ such that $x\tilde{y}w \rightarrow^{*} x\tilde{y}$. 
We construct such a string as follows: let $w = y' \tilde{y}$. 

Therefore, $x\tilde{y}w = x \tilde{y} y' \tilde{y} =
xy\tilde{y}$. Since $xy \rightarrow^{!} x$ 
we get $\, x\tilde{y}w = xy\tilde{y} \rightarrow^{*} x\tilde{y}$.
\end{proof}

We now prove that $R$ is subterm-collapsing
if and only if there is a right-hand side of
a rule in $R$ that causes a subterm collapse in 
the sense of the above definition. This lemma will
be key in showing the decidability of the subterm-collapse
problem as it allows us only to consider right-hand sides
of rules for possible sources of subterm-collapse. 

\medskip{}
\begin{Lemma}
\label{RHSsubtermcollapse}
Suppose $R$ is a convergent forward-closed string rewriting system. Then
$R$ is subterm-collapsing if and only if and only if there is a right-hand
side $r$ that causes a subterm-collapse.
\end{Lemma}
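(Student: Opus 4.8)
The plan is to prove both directions, with the forward direction being essentially trivial and the reverse direction requiring the machinery built up in Lemmas~\ref{NormalFormSeq} and~\ref{PrefixSubtermCollapse}.

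For the ``if'' direction, suppose some right-hand side $r$ of a rule $l \rightarrow r \in R$ causes a subterm-collapse, i.e., there is a non-empty string $y$ with $ry \rightarrow_R^* r$. Without loss of generality (using Lemma~\ref{RightReducedEquivalence}) we may assume $R$ is right-reduced so that $r \in IRR(R)$; by Lemma~\ref{PrefixSubtermCollapse} we may also take $y \in IRR(R)$. Then $l y \rightarrow_R r y \rightarrow_R^* r$, and $r \leftarrow_R l$, so $ly \rightarrow_R^* l$ up to the equivalence---more carefully, $l$ and $r$ are joinable (indeed $l \rightarrow r$), so $ly$ and $l$ are joinable to the common normal form of $r$. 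Thus $l$ itself causes a subterm-collapse with the witness $y \neq \lambda$, and hence $R$ is subterm-collapsing.

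For the ``only if'' direction, suppose $R$ is subterm-collapsing: there exist $x \in IRR(R)$ and non-empty $y \in IRR(R)$ with $xy \rightarrow_R^! x$. First I would reduce to the quasi-deterministic case: if $R$ is not quasi-deterministic we may still invoke Lemma~\ref{NormalFormSeq} provided its hypotheses hold, so the cleanest route is to observe that we only need the decomposition of the reduction path; alternatively, replace $R$ by an equivalent right-reduced system via Lemma~\ref{RightReducedEquivalence}. Applying Lemma~\ref{NormalFormSeq} to $xy \rightarrow^! x$ (with $z = x$), we obtain irreducible strings $x = x_1, x_2, \ldots, x_{n+1}$ and $y_1, \ldots, y_{n+1}$ with $y = y_1 \cdots y_{n+1}$, each $x_i y_i$ an innermost redex reducing in one step $x_i y_i \rightarrow x_{i+1}$, and $x_{n+1} y_{n+1} = x = x_1$. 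Since $y \neq \lambda$, we have $n \geq 1$. Now each one-step reduction $x_i y_i \rightarrow x_{i+1}$ uses a rule $l^{(i)} \rightarrow r^{(i)}$, and by forward-closedness together with Lemma~\ref{LLSingleStep} this is the leftmost-largest step, so $x_{i+1}$ is obtained by replacing the $l$-part suffix of $x_i y_i$ with $r^{(i)}$; in particular $r^{(i)}$ is a suffix of $x_{i+1}$.

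The heart of the argument---and the step I expect to be the main obstacle---is to locate a right-hand side that itself collapses. The idea is to track a right-hand side ``around the cycle'' $x_1 \to x_2 \to \cdots \to x_{n+1} \to$ (back to $x_1$). Consider the first rewrite step, which writes $x_1 y_1 = (\text{$s$-part})\,r^{(1)}$ after rewriting, so $r^{(1)}$ is a suffix of $x_2$. Then $x_2 y_2 \rightarrow x_3$, and since $y_2$ is appended on the right and the redex $x_2 y_2$ is innermost, the rewrite either lies entirely to the right of (or overlapping) the occurrence of $r^{(1)}$ we are tracking, or it consumes part of it. Because the net effect of the whole chain returns $x_1$ (which is strictly shorter than $x_1 y$ unless reductions genuinely shrink things), some right-hand side $r^{(i)}$ must reappear—up to the collapse relation—as a substring position that gets ``eaten back down.'' Concretely, I would argue that $r^{(1)}\, y_2 y_3 \cdots y_{n+1}\, y_1 \rightarrow_R^* r^{(1)}$: starting from $r^{(1)} y_2 \cdots y_{n+1} y_1$, prepend the $s$-part of the first redex to recover $x_2 y_2 \cdots y_{n+1} y_1 = x_2 y_2 \cdots y_{n+1} y_1$, push the reductions through to reach $x_1 y_1 = x_{n+1} y_{n+1} y_1$ wait—more carefully, one runs the chain to get down to $x_1$ and then has the leftover $y_1 \cdots$; the bookkeeping here is exactly the delicate part, and I would handle it by choosing the witness string for $r^{(1)}$ to be $y_2 y_3 \cdots y_{n+1} y_1$ and verifying $r^{(1)} \cdot (y_2 \cdots y_{n+1} y_1) \rightarrow_R^* r^{(1)}$ by prepending the context $x'$ (the $s$-part of the first step) on the left, applying the established chain to land on $x_1 = x$, and using $xy \rightarrow^! x$ once more cyclically. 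Since $y_2 \cdots y_{n+1} y_1 = y$ reordered is non-empty (as $y \neq \lambda$), $r^{(1)}$ causes a subterm-collapse, completing the proof.
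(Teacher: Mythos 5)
Your ``if'' direction is fine (indeed it is immediate: $r$ is itself a string that causes a subterm-collapse, so $R$ is subterm-collapsing; the detour through $l$ is unnecessary). The ``only if'' direction, however, has a genuine gap at exactly the point you flag as delicate, and the gap is not repairable by the bookkeeping you sketch. Your cyclic argument does correctly establish that $x_2 = x'r^{(1)}$ causes a subterm-collapse with witness $y_2\cdots y_{n+1}y_1$: prepending $x'$, running the chain down to $x_1 = x'l_1$, appending $y_1 = l_2$ and firing the first rule again returns you to $x'r^{(1)}$. But $x'r^{(1)}$ is not a right-hand side; it is a right-hand side with a left context. The step from ``$x'r^{(1)}$ collapses'' to ``$r^{(1)}$ collapses'' requires the reduction $r^{(1)}y_2\cdots y_{n+1}y_1 \rightarrow^* r^{(1)}$ \emph{without} the context $x'$, and nothing guarantees this: the redex fired at stage $i$ is $x_iy_i$, whose $l$-part may extend strictly to the left of the tracked occurrence of $r^{(1)}$ into $x'$. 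Once a single step in the chain consumes symbols of $x'$, the truncated string $r^{(1)}y_2\cdots$ need not even be reducible at that position, and the chain breaks. A secondary issue: Lemma~\ref{NormalFormSeq} as stated assumes quasi-determinism, which is not among the hypotheses here; right-reducing via Lemma~\ref{RightReducedEquivalence} does not supply it, so your appeal to that lemma also needs justification.

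The paper avoids this trap with a different argument: take a \emph{shortest} string $w = aw'$ causing a collapse and ask whether the leftmost symbol $a$ is ever inside the $l$-part of a fired redex. If not, the collapse lives entirely in $w'$, contradicting minimality of $|w|$. If so, then at the moment $a$ is consumed the $l$-part begins at position~$1$, so the whole current string is a left-hand side $l$ and rewrites to a bare right-hand side $r$ with no residual left context; combining Lemma~\ref{PrefixSubtermCollapse} with confluence (and right-reducedness, so that $r$ is a normal form) then transfers the collapse from $wz'$ to its descendant $r$. That minimality-plus-first-letter case analysis is precisely the device that lets one strip away the left context, and it is what your proposal is missing.
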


\begin{proof}
If there is a right-hand side that causes a subterm-collapse, then $R$ is subterm-collapsing. Towards 
  proving the ``only if'' direction, assume for the sake of deriving a contradiction that the result doesn't hold,
  i.e., $R$ is subterm-collapsing but no right-hand side causes a subterm-collapse.
  Then, let $w$ be one of the \emph{shortest} strings that causes a subterm-collapse.
  
  Since $w \not = \lambda$ it must be the case that $(\exists a \in \Sigma)(\exists w' \in \Sigma^{*}) : w = aw'$. Also, since
  $w$ is assumed to cause a subterm-collapse, $(\exists z \in \Sigma^{+}) : wz = aw'z \rightarrow^{*} w = aw'$.
  There are thus two cases to consider: 
  either $a$ is involved in the reduction, i.e., $a$ is in the $l$-part of a redex, or it is not. 
 
  Suppose $a$ is involved in the reduction. By Lemma~\ref{RightReducedEquivalence}, without
  loss of generality we can assume that $R$ is right-reduced.
  Since $a$ is the first letter of $w$ and $a$ is involved in some reduction step, there must be a
  prefix~$z'$ and a corresponding suffix $z''$ of~$z$ such that $wz' = aw'z' \rightarrow^{*} aw'' \rightarrow r$
  and $rz'' \rightarrow^{*} w$ for some~$w''$. That is, $aw''$ is a redex as well as its 
  $l$-part, i.e., $aw'' = l$ for some $l \rightarrow r \in R$.
  But by the previous lemma (Lemma~\ref{PrefixSubtermCollapse}), $wz'$ and hence $r$ causes
  a subterm-collapse. This contradicts our assumption.
  
  Now, suppose $a$ is not involved in the reduction sequence. Then it must be
  that $w'z \rightarrow^{*} w'$. Thus,
  $w'$ causes a subterm-collapse and $|w'| < |w|$, which contradicts the minimality of $w$. 
\end{proof}


The main lemma of this section appears below. It 
gives us that a certain language, parameterized by 
two strings $u, v \in \Sigma^{*}$, is a deterministic
context-free language. We prove this by constructing
a deterministic pushdown automaton to recognize this 
language.

\medskip{}
\begin{Lemma}
\label{DPDA}

Let $R$ be a convergent, right-reduced, and forward-closed string rewriting system, $u, v \in IRR(R)$, 
and $\# \not \in \Sigma$. Then the language \[ \mathcal{L}_{u, v}^{} \; = \; \left\{ \vphantom{b^b}
\, w \# ~ \; \mid \; ~ uw \rightarrow_{}^! v, \; ~ w \not = \lambda \, \right\} \] is
a deterministic context-free language over $(\Sigma \cup \{\#\})^*$

\end{Lemma}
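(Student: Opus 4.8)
The plan is to build a deterministic pushdown automaton (DPDA) $M$ that, on input $w\#$, simulates the leftmost-largest reduction of $uw$ and accepts exactly when the resulting normal form is $v$ and $w\neq\lambda$. The key structural fact that makes a DPDA possible is Lemma~\ref{NormalFormSeq} together with Lemma~\ref{LLSingleStep}: in a convergent, forward-closed, right-reduced system, reducing $uw\to^! v$ can be carried out by a sequence of leftmost-largest steps, each of which rewrites an \emph{innermost} redex $x_i y_i$ to its normal form $x_{i+1}$ in a single step, consuming a block $y_i$ of the input and replacing a suffix of the ``current prefix'' by the right-hand side. Crucially, an innermost redex is detected purely by scanning left to right and watching for the first occurrence of a left-hand side as a suffix of what has been read so far --- this is exactly what a pushdown stack can track.

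First I would set up the stack to hold the current irreducible prefix $x_i$, read bottom-to-top, with $u$ (which is in $IRR(R)$) as the initial stack contents and the control in a ``scanning'' state. On reading each input symbol $a\neq\#$, the automaton pushes $a$ and then checks, using finite control, whether the top of the stack now ends in a left-hand side $l$ of some rule; because $R$ is right-reduced and has distinct left-hand sides (Corollary~\ref{DistinctLHS}) and because no lhs properly contains another as a substring is \emph{not} assumed, I must be slightly careful --- but innermost-ness means I react to the \emph{first} (shortest-prefix) redex, and Lemma~\ref{LLSingleStep} guarantees the leftmost-largest choice of rule is forced and unique, so the transition is deterministic. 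When a redex $x_i y_i$ is completed, $M$ pops the matched left-hand-side portion and pushes $\rho_R(r)=r$ (the rhs is already in normal form), \emph{without} consuming further input, then resumes scanning; since $r$ may itself create a new innermost redex with the symbols now below it on the stack, I allow a bounded cascade of such $\varepsilon$-moves, each strictly shrinking a well-founded measure (the reduction terminates, so only finitely many such pops happen before the stack is again irreducible on top). A standard bound: after pushing $r$ the only new redexes involve at most $|r|$ symbols of $r$ plus a bounded lookback into the stack, all checkable in finite control, and termination of $R$ bounds the cascade length. On reading $\#$, $M$ must verify that the stack (bottom-to-top) spells exactly $v$; this is a finite-state check done while popping, so it is deterministic, and $M$ accepts iff the check succeeds \emph{and} at least one input symbol was read before $\#$ (a one-bit flag in the control enforces $w\neq\lambda$).

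The main obstacle is arguing that the leftmost-largest reduction of $uw$ can genuinely be tracked with a \emph{stack} rather than needing random access --- i.e., that once a redex is reduced and the rhs is pushed, all \emph{future} redexes that matter lie at or near the top of the stack and never force us to re-examine deep material except as a bounded suffix. This is where I lean hardest on Lemma~\ref{NormalFormSeq}: it tells us the reduction decomposes as $x_i y_i \to x_{i+1}$ with each $x_{i+1}$ irreducible, so after each step the entire stack content $x_{i+1}$ is irreducible and the next redex can only be formed by appending the next input block $y_{i+1}$ --- hence the only non-input-driven rewriting is the finite cascade triggered immediately by pushing $r$, and ``near the top'' means within the constant $\max_l |l|$. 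Once that invariant is established (stack content always equals the current irreducible prefix between input reads, modulo a bounded cascade), determinism follows from uniqueness of leftmost-largest normal forms for terminating systems ($\rho_R$ is well-defined) plus Lemma~\ref{LLSingleStep}, and correctness follows because $uw\to^! v$ iff $\rho_R(uw)=v$ by convergence. I would conclude that $\mathcal{L}_{u,v}$ is accepted by a DPDA and is therefore a deterministic context-free language.
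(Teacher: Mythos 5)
Your construction is essentially the paper's proof: a DPDA whose stack holds the current irreducible prefix, initialized to $u$, which pushes input symbols, fires a reduction whenever the top of the stack completes a (necessarily innermost) redex, and on reading $\#$ checks that the stack spells $v$, with correctness resting on Lemma~\ref{NormalFormSeq} exactly as in the paper (which additionally spells out the lhs-detection via an Aho--Corasick automaton simulated on the stack). One remark: the $\varepsilon$-move ``cascade'' you provision for never occurs --- by forward-closure (Lemma~\ref{LLSingleStep}) the innermost redex $x'l_0$ rewrites in one step to its \emph{normal form} $x'r_0$, so the whole stack content is again irreducible and pushing $r_0$ cannot create a new redex; this is fortunate, since your stated justification for bounding the cascade (``termination of $R$ bounds the cascade length'') would not by itself yield the uniform bound a finite control needs.
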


\begin{proof}
 We design a deterministic pushdown automaton (DPDA) 
 $\mathcal{M}$ that recognizes $\mathcal{L}_{u, v}^{}$. In the sequel, let
 $x$ denote the contents of $\mathcal{M}'s$ stack from bottom to top.
 
 Initially, $\mathcal{M}$ pushes a special symbol, $\$$, onto the
 stack (which serves as a bottom marker) and then pushes~$u$. Thus,
 the contents of the stack after the initialization steps are~$\$u$.
 
 Then, we design a transition system based on two cases. Either
 pushing the symbol $a \in \Sigma$ completes a redex or it does
 not. That is,
 \begin{itemize}
  \item[1] $(x, \; a) \mapsto (xa, \; \lambda)$ if $xa$ is not a redex, or
  \item[2] $(x, \; a) \mapsto (x'r_0, \; \lambda)$ if $xa = x'l_0$ 
where $x'$ is the $s$-part and $l_0$ the $l$-part of~$xa$ (i.e., $l_0$ the \emph{longest} left-hand side 
in~$R$ that is a suffix of~$xa$).
 \end{itemize}

 $\mathcal{M}$ will carry out the above transitions by pushing symbols
 of $w$ (which is initially on the tape) and reducing each redex that
 appears. When $\mathcal{M}$ reaches the $\#$ symbol, by
 Lemma~\ref{NormalFormSeq} if $uw \rightarrow^{!} v$ then the contents
 of the stack must be~$\$v$. This can be checked by~$\mathcal{M}$.
 
 Finally, $\mathcal{M}$ can be created by building an Aho-Corasick
 automaton, $\mathcal{K}$, for the set
$\left\{ \vphantom{b^b} l_1, l_2, \ldots, l_n \right\}$
as given, for instance, in~\cite{Crochemore-Rytter}.
Then $\mathcal{M}$ can simulate
 $\mathcal{K}$ on its stack by essentially restarting $\mathcal{K}$
 whenever $\mathcal{K}$ accepts.
\end{proof}

As a consequence of the above Lemma~\ref{DPDA} the subterm collapse problem is decidable
for convergent, foward-closed, string-rewriting systems. 

\begin{corollary}
\label{subtermdec}
 The following decision problem: \\ 
 
 \noindent

\underline{\underline{Given: }} A convergent, forward-closed, right-reduced SRS $R$.

\underline{\underline{Question: }} Is $R$ subterm-collapsing? \\

is effectively solvable.
\end{corollary}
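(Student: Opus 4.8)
The plan is to reduce the decision problem to checking, for each rule's right-hand side, whether that right-hand side causes a subterm-collapse, and then to decide each such check using the deterministic context-free language supplied by Lemma~\ref{DPDA}. By Lemma~\ref{RHSsubtermcollapse}, $R$ is subterm-collapsing if and only if some right-hand side~$r$ of a rule of $R$ causes a subterm-collapse, so it suffices to give a decision procedure for the single-string question: given $x \in IRR(R)$, is there a non-empty $y$ with $xy \rightarrow_R^! x$? Note every right-hand side is in normal form since $R$ is right-reduced, so each candidate $r$ is a legitimate input to the per-string test.

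First I would observe that for fixed $x \in IRR(R)$, the condition ``$x$ causes a subterm-collapse'' is exactly the condition that the language $\mathcal{L}_{x,x}$ from Lemma~\ref{DPDA} (taking $u = v = x$) is non-empty: its members are precisely the strings $w\#$ with $w \neq \lambda$ and $xw \rightarrow_R^! x$. Lemma~\ref{DPDA} tells us $\mathcal{L}_{x,x}$ is a deterministic context-free language, and in fact its proof constructs an explicit DPDA $\mathcal{M}$ for it (built from the Aho--Corasick automaton for the left-hand sides of $R$, simulated on the stack). Since emptiness of a context-free language given by a pushdown automaton is decidable --- convert $\mathcal{M}$ to a context-free grammar and test whether any terminal string is derivable, or equivalently compute the set of productive/reachable nonterminals --- we get an effective test for whether $x$ causes a subterm-collapse.

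The full procedure is then: for each rule $l \rightarrow r$ in $R$, build the DPDA $\mathcal{M}_{r}$ recognizing $\mathcal{L}_{r,r}$ as in Lemma~\ref{DPDA}, decide whether $L(\mathcal{M}_r) = \emptyset$, and answer ``yes, $R$ is subterm-collapsing'' iff for some rule the language is non-empty; otherwise answer ``no.'' Correctness is immediate from Lemma~\ref{RHSsubtermcollapse} together with the characterization of $\mathcal{L}_{r,r}$ non-emptiness above. Termination/effectiveness follows because $R$ is finite, the DPDA construction in Lemma~\ref{DPDA} is effective (the Aho--Corasick automaton and the bounded stack-simulation transitions are all computable from $R$), and context-free emptiness is decidable.

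I do not expect a serious obstacle here: the heavy lifting is entirely in Lemma~\ref{DPDA}, and this corollary is a short composition of that lemma with Lemma~\ref{RHSsubtermcollapse} and classical decidability of CFL emptiness. The only points requiring a sentence of care are (i) checking that taking $u = v = r$ in Lemma~\ref{DPDA} is legitimate, i.e. that $r \in IRR(R)$, which holds because $R$ is assumed right-reduced; and (ii) noting that although $\mathcal{L}_{r,r}$ is \emph{deterministic} context-free, we only need plain context-free emptiness, so no subtlety about the DPDA versus PDA distinction enters.
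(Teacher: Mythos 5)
Your proposal is correct and follows essentially the same route as the paper's own proof: apply Lemma~\ref{RHSsubtermcollapse} to reduce to the right-hand sides, build the DPDA for $\mathcal{L}_{r,r}$ from Lemma~\ref{DPDA}, and decide emptiness via conversion to a context-free grammar. Your two added remarks (that right-reducedness guarantees $r \in IRR(R)$, and that only plain CFL emptiness is needed) are accurate and, if anything, make the argument slightly more careful than the paper's.
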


\begin{proof}
 A decision procedure can be constructed by creating, for each $l
\rightarrow r \in R$, a DPDA~$\mathcal{M}_{r}$ such that
$L(\mathcal{M}_{r}) = \mathcal{L}_{r, r}$ by lemma~\ref{DPDA}. 
$\mathcal{M}_r$ can then be converted into an equivalent context-free grammar~$G_{r}$. Then
$L(G_r) = \varnothing$ if and only if $r$ does not cause a
subterm-collapse. By Lemma~\ref{RHSsubtermcollapse} this is enough to
conclude that $R$ is not subterm-collapsing in general. Finally,
deciding if a CFG generates the empty language is decidable,
therefore, the overall problem is decidable as well.
\end{proof}

Note also, that the construction outlined above can be carried out
\emph{in polynomial time.} Thus, not only is the above
subterm-collapse problem for convergent, forward-closed string
rewriting systems decidable, it is efficiently decidable. This is in
contrast to the results of~\cite{NotesOnBSM} where it was shown that
checking if a given term-rewriting system is subterm-collapsing, even
when the system satisfies all of the other Lynch-Morawska conditions, is
undecidable.

We can therefore conclude that the problem of verifying if a convergent
and forward-closed string rewriting system (or a term rewriting system
over a signature of monadic function symbols) is an $LM$-system is decidable.

As another corollary of the above result, we get that the cap problem 
for convergent, forward-closed, string-rewriting systems is also decidable. 
This problem, also known as the deduction problem, is often studied in the field
of symbolic cryptographic protocol analysis. 

\begin{corollary}
\label{cap}

The Cap Problem:
\indent
 \begin{itemize}
\item[\underline{\underline{Given:}}] A convergent, forward-closed 
 string-rewriting system $R$, a string $u \in \Sigma^{+}$ 
 (representing the intruder knowledge) and a secret $v \in \Sigma^{+}$.
  
\item[\underline{\underline{Question:}}] Does there exists a string $w \in
 \Sigma^{+}$ (called a cap term) such that $uw \rightarrow_{R}^{!} v$? 
 \end{itemize}
is decidable. 
 
\end{corollary}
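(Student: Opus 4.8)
The plan is to reduce the Cap Problem to the emptiness problem for context-free languages by way of Lemma~\ref{DPDA}; essentially all of the genuine work has already been carried out there, and what remains is to massage the given data so that the hypotheses of that lemma apply.

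First I would normalize the rewrite system. Given the convergent, forward-closed system $R$, apply Lemma~\ref{RightReducedEquivalence} once for each rule whose right-hand side is reducible; this terminates after finitely many steps and yields a convergent, forward-closed, \emph{right-reduced} system $R'$ equivalent to $R$. As noted in the proof of that lemma, each such step preserves $IRR$, so $IRR(R') = IRR(R)$; since both systems are confluent and generate the same equivalence, every equivalence class has the same unique normal form under $R'$ as under $R$. Hence $uw \rightarrow_{R}^{!} v$ if and only if $uw \rightarrow_{R'}^{!} v$, and it suffices to decide the Cap Problem for $R'$.

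Next I would normalize the two data strings. Compute a normal form $u'$ of $u$ modulo $R'$, i.e.\ $u \rightarrow_{R'}^{!} u'$. Since rewriting is closed under context, $uw \rightarrow_{R'}^{*} u'w$, and by confluence $uw$ and $u'w$ have the same normal form, so $uw \rightarrow_{R'}^{!} v$ iff $u'w \rightarrow_{R'}^{!} v$. Moreover, if $v \notin IRR(R')$ then no string rewrites to $v$ in normal form and we may immediately answer ``no''. Otherwise $u', v \in IRR(R')$, and the hypotheses of Lemma~\ref{DPDA} are met: the language $\mathcal{L}_{u', v} = \left\{ w\# \mid u'w \rightarrow_{R'}^{!} v,\ w \neq \lambda \right\}$ is a deterministic context-free language, hence context-free. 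I would then construct the DPDA of Lemma~\ref{DPDA}, convert it to an equivalent context-free grammar $G$, and test whether $L(G) = \varnothing$, which is decidable; $L(G) \neq \varnothing$ holds precisely when there is a nonempty $w$ with $u'w \rightarrow_{R'}^{!} v$, i.e.\ precisely when the answer to the Cap Problem is ``yes''.

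As with Corollary~\ref{subtermdec}, every step except possibly the grammar construction runs in polynomial time, and CFG-emptiness is polynomial, so the procedure is not merely effective but efficient. The only point that requires care — and thus the ``main obstacle'', modest as it is — is justifying that passing to $R'$ and to the normal form of $u$ leaves the answer unchanged; everything truly delicate is already encapsulated in Lemma~\ref{DPDA}.
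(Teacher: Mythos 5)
Your proposal is correct and follows essentially the same route as the paper: construct the DPDA of Lemma~\ref{DPDA} for $\mathcal{L}_{u,v}$, convert to a grammar, and test emptiness, exactly as in Corollary~\ref{subtermdec}. The only difference is that you explicitly carry out the preprocessing (right-reducing $R$ via Lemma~\ref{RightReducedEquivalence}, normalizing $u$, and rejecting when $v \notin IRR(R)$) needed to meet the hypotheses of Lemma~\ref{DPDA}, which the paper's one-line proof leaves implicit; this is a welcome tightening, not a different argument.
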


\begin{proof}
 The construction is essentially the same as that in the proof of Corollary~\ref{subtermdec}. This time
 a $DPDA$ is constructed, using Lemma~\ref{DPDA}, for the language~$\mathcal{L}_{u, v}$. 
\end{proof}

The result of Corollary~\ref{cap} is contrasted with the fact that, 
for general term-rewriting systems, the cap problem is known to be undecidable. 
The cap problem remains undecidable even when restricted to $LM$-Systems. 
The above results shows, in the monadic case, if $R$ is convergent and 
forward-closed, then the problem becomes decidable.

\newpage
\bibliographystyle{plain}
\bibliography{ref2}

\end{document}